\pdfoutput=1

\documentclass[oneside,twocolumn]{article}

\usepackage[T1]{fontenc} 
\linespread{1.05} 
\usepackage{microtype} 

\usepackage[english]{babel} 

\usepackage[hmarginratio=1:1,top=32mm,columnsep=20pt,textwidth=16cm]{geometry} 
\usepackage[hang, small,labelfont=bf,up,textfont=it,up]{caption} 
\usepackage{booktabs} 

\usepackage{enumitem} 
\setlist[itemize]{noitemsep} 

\usepackage{abstract} 

\usepackage{titlesec} 
\renewcommand\thesubsection{\roman{subsection}} 
\titleformat{\section}[block]{\large\scshape\centering}{\thesection.}{1em}{} 
\titleformat{\subsection}[block]{\large}{\thesubsection.}{1em}{} 

\newcommand{\bingbat}{\vspace{-.8em}\par\noindent\begin{center}\rule{0.25\columnwidth}{0.4pt}\end{center}}

\usepackage{fancyhdr} 
\pagestyle{fancy} 
\fancyhead{} 
\fancyfoot{} 
\fancyhead[C]{Quantum in Additive Space} 
\fancyfoot[R]{\thepage} 

\usepackage{titling} 

\usepackage{hyperref} 

\usepackage{amsmath}
\usepackage{amssymb}
\usepackage{soul}
\usepackage{physics}
\usepackage{csquotes}
\usepackage{graphicx}
\usepackage{xfrac}
\usepackage{bm}
\usepackage[b]{esvect}
\usepackage{mathtools}

\usepackage{amsthm}
\usepackage{cleveref}

\usepackage{newpxtext} 
\usepackage[varbb]{newpxmath}

\theoremstyle{definition}
\newtheorem*{definition}{Definition}
\theoremstyle{definition}
\newtheorem*{corollary}{Observation}
\theoremstyle{definition}
\newtheorem{thm}{Theorem}

\newcommand{\Ry}{\textsc{Ry}}
\newcommand{\Rz}{\textsc{Rz}}
\newcommand{\X}{\textsc{X}}
\newcommand{\Cx}{\textsc{CX}}
\newcommand{\Ccx}{\ensuremath{\textsc{C}^2\X}}

\newcommand{\Ctrln}[1]{\ensuremath{\textsc{C}_n#1}}
\renewcommand{\vec}[1]{\vv{\bm{#1}}}

\usepackage{titletoc}

\usepackage{array}
\newcolumntype{R}[1]{>{\raggedleft\let\newline\\\arraybackslash\hspace{0pt}}p{#1}}

    \titlecontents{chapter}
    [0em] %
    {\medskip\bfseries}
    {\thecontentslabel\quad}
    {}
    {\hfill\contentspage}

    \titlecontents{section}[1.72em]{\smallskip}%
    {\thecontentslabel.\enspace}
    {}
    {\titlerule*[1.2pc]{.}\contentspage}%

\makeatletter
                {\list{}{\leftmargin=0pt 
                        \labelwidth\z@ \itemindent-\leftmargin
                        }}%
                {\endlist}
\makeatother

\usepackage[
  backend=biber,
  style=numeric-comp,
  sorting=none,
  url=false,
  isbn=false
]{biblatex}
\addbibresource{jabref_clean.bib}


\setlength{\droptitle}{-4\baselineskip} 

\pretitle{\begin{center}\Huge\bfseries} 
\posttitle{\end{center}} 
\title{Quantum Circuits\\in Additive Hilbert Space} 
\makeatletter \newcommand{\mytitle}{\@title} \makeatother
\author{%
\textsc{Luca Mondada}\\[1ex]
\normalsize Department of Computer Science, University of Oxford \\
\normalsize \href{mailto:luca.mondada@cs.ox.ac.uk}{luca.mondada@cs.ox.ac.uk}
}
\date{25th October 2021} 


\begin{document}

\maketitle

\setcounter{tocdepth}{1}
\setcounter{secnumdepth}{2}
{\tableofcontents}


\section{Circuits and Matrices are~not~Enough}

The quantum circuit, as an abstraction for quantum computation, has been widely adopted in quantum
computing~\cite{Nielsen2002}.
It can represent arbitrary unitary maps and be readily transpiled to hardware-level instructions;
it has thus become the \emph{de facto} standard for the design and execution of quantum computations~\cite{Openqasm}.
However, circuits also impose constraints on both ends of the quantum computing stack.

The viability of early quantum applications
will rest on execution optimization and noise mitigation techniques that must be tailored
and adjusted to characteristics of the specific hardware~\cite{Preskill2018}.
Emerging architectures such as ion traps~\cite{Schindler2013,Harty2014,Grzesiak2020,Figgatt2019},
cold atoms~\cite{Saffman2010,Scholl2021} and photonics~\cite{bartolucci2021,Arrazola2021} are widely
expected to introduce
unique hardware capabilities that may for instance
differ from the familiar two-qubit entangling primitives~\cite{Grzesiak2020,Figgatt2019,Martinez2016}.
The limited expressivity of the chosen gate set inherent to the circuit model 
severely limits the use of such architecture-specific instructions.

At the other end of the spectrum, circuit constrains quantum algorithm design.
In the circuit model, individual gates, typically defined in terms of their matrices,
are presented as arbitrary building blocks that the user is expected to compose into 
meaningful (and -- please -- fast) algorithms.
Unfortunately, most of us have come to the realisation
that this is a highly challenging task, for all but a few applications
that themselves present clear ``quantum-like''
structure~\cite{Aspuru2005,Lloyd1996,Meichanetzidis2021,Kartsaklis2021}.

Various other representations have been explored to address the disadvantages of quantum
circuits, whether it be for better circuit optimization and
execution performance~\cite{Stojkovic2020,Amy2019,Cowtan2020,Bae2020}
or for design and educational purposes~\cite{Abdollahi2006,coecke2018}.
By far the most common choice, however, is the matrix representation of the unitary computation.

Quantum computations, and hence quantum circuits, can always be expressed as linear unitary
transformation of finite dimensional Hilbert spaces, and as such,
can always be described by a matrix.

The matrix representation has multiple advantages that
can address some of the drawbacks of circuits discussed previously.
It is first of all 
a straightforward way to introduce quantum computing, as most audiences will be familiar
with the limited linear algebra that is required.
This simplicity also helps in algorithm design, as hardware details are abstracted away
to leave only algebraic considerations.

Matrices are furthermore unique, making them a useful canonical representation in many
circumstances~\cite{Amy2013818,Hales2000,Shende2002,Li2014}.
For instance, the most successful circuit optimization techniques to date
are in fact not circuit optimization:
they are circuit synthesis approaches that reduce an input circuit to its unique matrix representation,
which can in turn be decomposed optimally into a shorter circuit~\cite{Davis2020,Smith2021,Gheorghiu2021,Wu2020}.

It is also much simpler to obtain usual fidelity metrics for quantum computations such
as the Hilbert Schmidt from matrix representations~\cite{Jozsa1994}.
As a result, approximate compilation approaches, where precision in the computation is traded in
for lower noise, struggle to use circuit-based internal representations,
choosing to rely on matrix semantics instead~\cite{Madden2021}.

\bingbat

There is a significant conceptual gap between the matrix and the circuit representation.
The tensor product underlying quantum circuits in particular is often glossed over
when introducing the notation.
However, with this structure comes a counter-intuitive exponential scaling of the state space dimension,
which is unnatural to represent using component-wise vector and matrix representations.
This makes the link between linear algebra computations performed on the state space
and circuit representation hard to grasp and internalize.

In this paper, we propose a new representation of quantum computation in which the state space
is considered with its additive direct sum $\oplus$-structure instead of the tensor product $\otimes$
common in most quantum mechanics formulations.
It is in this sense a close parent of the matrix representation.

As a consequence, our new formulation lies between
the two representations just discussed: it combines the transparency of the linear algebra formalism
with the convenience of a diagrammatic language close to hardware primitives.
In this framework, the matrix of the computation can easily be visualized from the circuit-like presentation,
while executable circuits can still be synthesised.

We contend that this new formulation of quantum computation
provides a novel, simple model of quantum computation that bridges the gap between existing representations.
We believe this can support educational initiatives and quantum algorithm design,
as well as conceive new quantum circuit optimization that would improve quantum computation execution on hardware.

\subsubsection{Related work}
Quantum computer science and quantum mechanics in general have a rich history in diagrammatic
representations~\cite{Feynman1949,Coecke2008,coecke2018,vandewetering2020}.
Our approach draws a lot in spirit to diagrammatic formalizations of quantum computation
such as the ZX calculus and its variations~\cite{vandewetering2020,Backens2019,Roy2011,Wang2019}.
Just like the circuit model and unlike our proposal,
these calculi all rely on the multiplicative tensor $\otimes$-structure.

Diagrammatic theories of linear algebra with an additive $\oplus$-structure, termed signal flow graphs,
have also been proposed previously by Bonchi, Soboci{\'{n}}ski and Zanasi~\cite{Bonchi2015,Bonchi2017,Bonchi2021}.
They developed a universal and complete representation of finite dimensional linear algebra.

None of the diagrammatic calculi above, however, propose a framework for unitary-only operations,
a situation that makes circuit extraction from such representations difficult~\cite{Backens2020}.
In this sense, the diagrammatic primitives in our proposal have semantics that are closer to quantum circuits.

Our work also draws heavily on various optimization techniques, and in particular on circuit synthesis algorithms
for classical reversible circuits and unitary matrices.

Classical reversible circuit optimizations have a long history~\cite{Shende2002,Susam20171,Li2014,Stojkovic2020,Scott2008231,Arabzadeh2010849,Datta2013209,Datta2013316,Gado2021}.
We highlight in particular a strategy by Stojković and colleagues that can trade qubit ancillas for
reduced circuit depth using binary decision diagrams~\cite{Stojkovic2020} as well as Susam and Altun's sorting-based
circuit synthesis strategy~\cite{Susam20171}.

Unitary decomposition is another successful field for circuit optimization~%
\cite{Iten2019,Davis2020,Smith2021,Gheorghiu2021,Wu2020,Krol2021,Loke2014}.
Aside from state-of-the-art performance for search-based circuit optimization that we touched on earlier,
we mention Loke, Wang and Chen's circuit synthesis approach that
combines traditional matrix decomposition techniques
with reversible classical circuit optimization to outperform matrix decomposition alone.
We will see that such approaches are particularly suited to the additive diagrams that we are proposing.

\subsubsection{Notation}
Prepending a quantum gate with $\textsc{C}^n$ refers to the gate
controlled on $n$-qubits: e.g. $\textsc{C}^n\X$. If $n=1$, we usually omit the superscript.
We use subscript indices on gates to indicate which qubit or dimension it is applied to: $\Ry_{2}(\theta)$
is a rotation applied to the second qubit.
We represent bitstrings using the Little Endian convention: given three qubits 
\texttt{q0}, \texttt{q1} and \texttt{q2}, a bitstring $\ket{cba}$ means
(\texttt{q0}: $\ket{a}$, \texttt{q1}: $\ket{b}$ \texttt{q2}: $\ket{c}$).

When using $k$, $n$ or $m$ in relation to the number of qubits or dimensions of a system, it is always implicitly
assumed that $k,n,m \in \mathbb{N}^+$.
Similarly, rotation angles $\theta,\psi$ are always real numbers $\theta,\psi \in \mathbb{R}$.
We use blackboard boldface $\mathbb{0}_n, \mathbb{1}_n$ to refer to $n\times n$-dimensional zero
and identity matrices, respectively.
We use $\mathcal{H}_n$ to denote an $n$-dimensional Hilbert space.

The rest of our notation is standard in linear algebra.
$\mathcal{H}_n \oplus \mathcal{H}_m$ denotes the direct sum of Hilbert spaces.
If $A$ (resp. B) is a linear map on $\mathcal{H}_n$ (resp. $\mathcal{H}_m$),
then the linear map on the joint space $A\oplus B$ is given by the block diagonal matrix
with $A$ and $B$ on its diagonal.
Finally, $\mathcal{H}_n \otimes \mathcal{H}_m$ denotes the tensor product of the
two Hilbert spaces, and matrices on the joint space are obtained using the Kronecker product.

\section{Background}

\begin{figure*}[t]
\begin{align*}
  \Cx = \begin{bmatrix}
    1 & 0 & 0 & 0 \\ 0 & 1 & 0 & 0 \\ 0&0&0&1\\0&0&1&0 
  \end{bmatrix} &&%
  \Ry(\theta) = \begin{bmatrix}
    \cos(\sfrac\theta2) & -\sin(\sfrac\theta2) \\ \sin(\sfrac\theta2) & \cos(\sfrac\theta2)
  \end{bmatrix} &&%
  \Rz(\theta) = \begin{bmatrix}
    \exp(-i\frac\theta2) & 0 \\ 0 & \exp(i\frac\theta2)
  \end{bmatrix}
\end{align*}
\caption{Summary of gates and their associated matrices.}
\label{fig:gatemat}
\end{figure*}

We summarize here the salient points of the circuit model that are directly relevant to our work.
For readers wholly unfamiliar with the concepts, we recommend referring to standard
references such as~\cite{Nielsen2002} or the extensive Qiskit~\cite{Qiskit} documentation online%
\footnote{\url{https://qiskit.org/learn}}, which will provide better introductions.

In the circuit model, the basic unit of computation, the two-dimensional qubit, is represented
by a horizontal wire.
Reading from left to right, gates can then be applied to qubits in the form of boxes.
Boxes can act on a single-qubit, in which case they represent a $2\times 2$ unitary matrix,
or on $n$ qubits simultaneously -- with an associated $2^n\times 2^n$ matrix, the size of the matrix
growing exponentially with system size.
A typical example then looks like this
\begin{center}
  \includegraphics[width=.9\columnwidth]{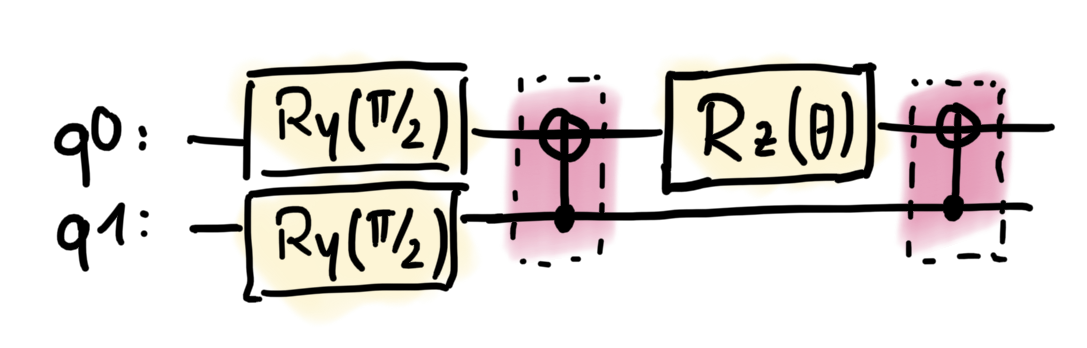}
\end{center}
In yellow are single-qubit gates (Y and Z-rotations)
the purple highlights two-qubit entangling \Cx{} gates.

To each qubit is associated a two-dimensional Hilbert space, its state space.
Crucially, as qubits are composed and considered as a joint system, their state space is
obtained by the tensor product $\mathcal{H}_1 \otimes \mathcal{H}_2$ of the individual
state space $\mathcal{H}_1, \mathcal{H}_2$.
This is the ``cause'' of the exponential size of unitary matrices.

A one-qubit rotation $\Ry(\theta) =${\ \footnotesize$
  \setlength{\arraycolsep}{2pt}
\begin{bmatrix}
  \cos(\theta) & \sin(\theta) \\ -\sin(\theta) & \cos(\theta)\\
\end{bmatrix}$}, applied to the second qubit of a two-qubit system, would become
\[
  \begingroup
  \setlength{\arraycolsep}{0pt}
  \Ry(\theta) \otimes \mathbb1_2 = \begin{bmatrix}
    \cos(\theta)  & \!\!      0       & \!\!\sin(\theta) & \!\!0\\
    0             & \!\!\cos(\theta)  & \!\!0            & \!\!\sin(\theta)\\
    -\sin(\theta) & \!\!   0          & \!\!\cos(\theta) & \!\!0\\
    0             & \!\!-\sin(\theta) & \!\!0            & \!\!\cos(\theta)\\
  \end{bmatrix}.
  \endgroup
\]
A remarkable result of quantum information theory is that arbitrary quantum computations
can be expressed in the circuit formalism from simple gate sets~\cite{Nielsen2002}.

A popular such \emph{universal} gate set is given by the entangling two-qubit \Cx{} gate
along with arbitrary single-qubit operations.
There are a variety of option to further
decompose single-qubit gates.
For our purposes, we will always consider the $\{\Ry, \Rz\}$ gate set, composed of Y-rotations
and Z-rotations.
The matrix representations of the \Cx, \Ry{} and \Rz{} gates is provided for reference
in \cref{fig:gatemat}.

A class of circuits of particular interest to us are reversible classical circuits~\cite{Shende2002}.
\begin{definition}
  A circuit whose matrix is given by a permutation $P\in S_{2^n}$ is called a \ul{reversible classical circuit}.
\end{definition}

The name refers to the fact that such circuits correspond to the subset of quantum circuits
that can be implemented in classical logic.
Such circuits can always be generated from the classical $\{\X, \Cx, \Ccx\}$ gate set, which correspond
to a NOT, a controlled-NOT and a doubly controlled NOT, also known as a Toffoli gate.

\section{Additive Quantum Circuits}
From now on, we refer to the commonly used circuit formalism summarized in the previous
section as the multiplicative, or $\otimes$-based model of quantum computation.
The phrasing highlights the structure of joint systems in that formulation.

In this section we present a new additive, or $\oplus$-based, quantum formalism.
We translate the quantum circuit primitives we introduced for the multiplicative language
to additive primitives and explore how they are related.

The next section will develop further the properties of this computation model, 
which will then be used in \cref{sec:circuit} to define a diagrammatic circuit-like
representation of additive computation.
In \cref{sec:examples}, we walk the reader through an example that illustrates the strengths
of the model.
We then present in \cref{sec:synthesis} a synthesis strategy that can convert any $\oplus$-based computation
into a standard $\otimes$-circuit, before concluding with a discussion in \cref{sec:discussion}.

\bingbat

Our definition of the additive model is based on the same $\{\Cx, \Ry, \Rz\}$ gate set introduced above.
The key difference is in how these gates, or rather their associated matrix, is modified when
they are considered within a larger-dimensional system.

We start with the simplest case, the \Rz{} rotation, and then proceed to the \Ry{} rotation, concluding
with the \Cx{} gate.

\subsubsection{Rz rotations}
\Rz{} rotations are one-qubit operations, and as such act non-trivially on a two-dimensional state space.
However, they can be rewritten up to global phase as
\begin{equation}
    \label{eq:phasegate}
    \Rz(\theta) = e^{-i\sfrac\theta2} \begin{bmatrix}
        1 & 0 \\ 0 & \exp(i\theta)
    \end{bmatrix}
\end{equation}
Global phases cannot be observed physically and as such the phase gate given by the matrix on the right-hand side
of \cref{eq:phasegate} are equivalent to the original \Rz{} gate.

We adopt this point of view in our additive formalism and define the additive $\Rz^+$ gate as a one-dimensional
operation.
\begin{definition}
    The \ul{additive Z-rotation $\Rz^+(\theta)$} is given by $x \mapsto \exp(i\theta)$ when $x\in\mathbb{C}^1$ is
    a one-dimensional state vector.
\end{definition}
This is generalized to arbitrary $n$-dimensional spaces using identities and the $\oplus$-structure:
\[\Rz^+_k(\theta) = \mathbb{1}_{k-1} \oplus \Rz^+(\theta) \oplus \mathbb{1}_{n-k}\]
and thus
\[
    \Rz^+_k(\theta)%
    \begin{bmatrix}
        x_1\\\vdots\\x_k\\\vdots\\x_n
    \end{bmatrix} = \begin{bmatrix}
        x_1\\\vdots\\\exp(i\theta)x_k\\\vdots\\x_n
    \end{bmatrix}.
    \label{eq:rzmulti}
\]
This corresponds to a diagonal matrix with diagonal $\big(1, \dots, 1, \exp(i\theta), 1, \dots, 1\big)$.

\subsubsection{Ry rotations}
\Ry{} rotations are one-qubit operations too, but unlike their $Z$ counterparts, they cannot be expressed
as one-dimensional operations in the computational basis.
Their definition is thus directly taken over from the multiplicative definition for the case of two-dimensional
systems.
\begin{definition}
    The \ul{additive Y-rotation $\Ry^+(\theta)$} is given by $\vec{x} \mapsto \Ry(\theta)\vec{x}$ when $\vec{x}\in\mathbb{C}^2$ is
    a two-dimensional state vector.
\end{definition}
When applied on a system with $n \geqslant 2$ dimensions, we specify the basis vector indices $(i,j)$ as subscripts.
For instance for $(i, j) = (1, 2)$ we have
\[
    \Ry^+_{1, 2}(\theta) \begin{bmatrix}
        x_1 \\ x_2 \\ \vdots \\ x_n
    \end{bmatrix} = \left(\Ry(\theta) \oplus \mathbb{1}_{n-2}\right) \begin{bmatrix}
        x_1 \\ x_2 \\ \vdots \\ x_n
    \end{bmatrix}.
\]
Note how these $n$-dimensional matrices look very different from their multiplicative counterparts.
For a four-dimensional, 2-qubit, state space, we would have
\begin{equation}
    \mathbb{1}_2 \otimes \Ry(\theta) = \Ry(\theta) \oplus \Ry(\theta) = \begin{bmatrix}
        \Ry(\theta) & \mathbb{0}_2 \\ \mathbb{0}_2 & \Ry(\theta)
    \end{bmatrix}
    \label{eq:multadd}
\end{equation}
which is very different from the additive
\[
    \mathbb{1}_2 \oplus \Ry(\theta) = \begin{bmatrix}
        \mathbb{1}_2 & \mathbb{0}_2 \\ \mathbb{0}_2 & \Ry(\theta)
    \end{bmatrix}.
\]
The difference becomes even more marked with larger systems.
The $n$-dimensional state space considered within the additive formulation
will have nothing to do with the $2^n$ dimensional state space of $n$ qubits.

\subsubsection{CX gates}
The \Cx{} gate is a two-qubit, and hence four-dimensional, operation.
Yet looking at its matrix representation in \cref{fig:gatemat}
we see that it only acts non-trivially on two of the four dimensions,
where it acts as an $\X =${\ \footnotesize$%
  \setlength{\arraycolsep}{2pt}%
  \begin{bmatrix}0 & 1\\ 1 & 0\end{bmatrix}$}
gate.

In the additive model, the four-dimensional \Cx{} gate can thus equivalently
be seen as the two-dimensional \X{} gate.

\begin{definition}
    The \ul{additive gate $\X^+$} is given by $\vec{x} \mapsto X\vec{x}$ when $\vec{x}\in\mathbb{C}^2$ is
    a two-dimensional state vector.
\end{definition}

The $\X^+$ gate is in fact just a swap of two dimensions;
when acting on two arbitrary coordinates of a larger space,
these swaps are the transpositions that can generate arbitrary permutations of the dimensions.
We thus get as a direct consequence:
\begin{corollary}
    The set of all circuits generated from $\X^+$ gates on $2^n$-dimensional state space, for $n \in \mathbb{N}$
    are precisely the reversible classical circuits on $n$ qubits.
\end{corollary}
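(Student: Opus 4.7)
The plan is to prove the two inclusions separately, relying on the classical fact that the transpositions generate the full symmetric group.

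First, I would unpack the semantics of $\X^+$ on a $2^n$-dimensional space. By definition, $\X^+_{i,j}$ applied to coordinates $(i,j)$ acts as $\X$ on the two-dimensional subspace spanned by the $i$-th and $j$-th computational basis vectors and as the identity on the remaining $\oplus$-summand. Since $\X = \begin{bmatrix}0 & 1\\ 1 & 0\end{bmatrix}$, this is precisely the permutation matrix corresponding to the transposition $(i\ j) \in S_{2^n}$. In particular, its matrix is a permutation, and any finite composition of $\X^+$ gates is again a permutation matrix, so such $\oplus$-circuits always implement some $P \in S_{2^n}$. By the definition of reversible classical circuits recalled in the background, every such circuit is therefore a reversible classical circuit on $n$ qubits.

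For the converse inclusion, I would invoke the elementary fact that $S_{2^n}$ is generated by transpositions: any $P \in S_{2^n}$ can be written as a product $\tau_1 \tau_2 \cdots \tau_r$ of transpositions $\tau_k = (i_k\ j_k)$. Applying the preceding identification in reverse, each $\tau_k$ is realized by the single gate $\X^+_{i_k, j_k}$, so $P$ is realized by the $\oplus$-circuit $\X^+_{i_1, j_1}\, \X^+_{i_2, j_2}\, \cdots\, \X^+_{i_r, j_r}$. Since every reversible classical circuit is, by definition, one whose matrix lies in $S_{2^n}$, this shows that every reversible classical circuit on $n$ qubits arises from an $\X^+$-only additive circuit.

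Combining the two directions yields set equality, which is the claim. There is no real obstacle here: the only mild point of care is to verify that the $\oplus$-summand convention used to lift $\X^+$ from two dimensions to $2^n$ dimensions really does act as a basis transposition rather than something more subtle; once that is observed, the result reduces to the standard generation of $S_{2^n}$ by transpositions.
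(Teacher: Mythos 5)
Your proposal is correct and follows essentially the same route as the paper, which justifies the observation by noting that $\X^+$ gates are transpositions of dimensions and that transpositions generate arbitrary permutations, combined with the definition of reversible classical circuits as those whose matrix is a permutation in $S_{2^n}$. You merely make the two inclusions explicit, which the paper leaves implicit.
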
%
Remark that because of this, it actually suffices to consider additive primitves acting on nearest neighbors.
Nearest neighbor $\X^+$ transpositions generate arbitrary permutations, which can in turn be used
to rearrange wires so that $\Ry^+$ rotations always apply to neighbors.
We will make use of this fact for convenience in the diagrammatic representation of \cref{sec:circuit}.

\bingbat

More generally, we can summarize the exposition in this section with the following result.
\begin{thm}
    \label{thm:universality}
    The additive gate set $\{\Ry^+, \Rz^+, \X^+\}$ is universal, in the sense that any computation on $n$ qubits
    can be expressed with this gate set in $2^n$-dimensional space.
\end{thm}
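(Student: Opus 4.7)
The plan is to reduce to the multiplicative universality of $\{\Cx, \Ry, \Rz\}$ recalled in the background. Any $n$-qubit unitary factors as a product of such multiplicative gates, so it suffices to translate each primitive individually into an additive expression on $\mathcal{H}_{2^n}$ under the standard bitstring-to-index identification $\ket{b_{n-1}\cdots b_0} \leftrightarrow $ index $1 + \sum_{k=0}^{n-1} b_k 2^k$.

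I would then treat the three generators in turn. For $\Rz_k(\theta)$, the matrix is (up to an irrelevant global phase, as in \cref{eq:phasegate}) diagonal with entry $e^{i\theta}$ at every index whose $k$-th bit is $1$; this is exactly the commuting product $\prod_{i:\, b_k(i)=1} \Rz^+_i(\theta)$. For $\Ry_k(\theta)$, the multiplicative gate is block-diagonal in the partition of basis indices by bit $k$, acting as the $2 \times 2$ matrix $\Ry(\theta)$ on every pair $(i,j)$ that differs only in position $k$; these pairs are disjoint, so $\Ry_k(\theta) = \prod_{(i,j)} \Ry^+_{i,j}(\theta)$. For $\Cx_{c,t}$, the matrix is the permutation that swaps $i \leftrightarrow j$ whenever bit $c$ is set in both and $i, j$ differ only in bit $t$; this is a product of disjoint transpositions, each realised by a single $\X^+_{i,j}$, and the observation preceding the theorem further reduces each such transposition to nearest-neighbour $\X^+$ gates if desired.

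Composing these translations over the given multiplicative decomposition produces the required additive expression for the original unitary. The only real work is notational bookkeeping---checking that each claimed direct-sum decomposition matches the block structure of the corresponding multiplicative gate under the chosen basis ordering, and that the factors within each product commute so that the products are unambiguous. I do not anticipate a genuine obstacle: the argument is essentially a dictionary between multiplicative and additive presentations of the same three gates, and no new universality content is imported beyond what is already inherited from the circuit model.
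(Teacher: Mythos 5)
Your proposal is correct and follows essentially the same route as the paper: reduce to multiplicative universality of $\{\Cx,\Ry,\Rz\}$ and translate each generator by exhibiting its block-diagonal (direct-sum) structure as a parallel collection of $\Ry^+$, $\Rz^+$ or $\X^+$ primitives, with the $\Rz$ case handled up to global phase exactly as in \cref{eq:phasegate}. The only cosmetic difference is that you enumerate the affected index pairs bitwise for a gate in an arbitrary position, whereas the paper first permutes qubits WLOG and then invokes the single identity $\mathbb{1}_{2^{n-k}}\otimes A_k=\bigoplus_{i=1}^{2^{n-k}}A_k$; the underlying observation is identical.
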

\begin{proof}
    We need to show that any \Ry{}, \Rz{} and \Cx{} gate in a multiplicative $n$-qubit circuit can be expressed as additive
    gates in $2^n$ dimensional space.
    The result then follows from universality of the gate set.

    Consider a $k$-qubit multiplicative operation $A_k$ within an $n$-qubit system.
    As qubits can be permuted using classical reversible circuits,
    we can assume without loss of generality that $A_k$ acts on the first $k$ consecutive qubits of the system.
    Generalizing \cref{eq:multadd}, we observe that the $2^n\times2^n$ matrix
    $\mathbb{1}_2 \otimes \cdots \otimes \mathbb{1}_2 \otimes A_k = \mathbb{1}_{2^{n-k}} \otimes A_k$
   of the joint system
    is a block-wise diagonal matrix
    \begin{equation}
        \label{eq:distr}    
        \mathbb{1}_{2^{n-k}} \otimes A_k = \bigoplus_{i=1}^{2^{n-k}} A_k.
    \end{equation}
    The expression on the right-hand side corresponds to an additive operation with
    $2^{n-k}$ gates in parallel.

    Plugging $k = 1$ and $A_1 = \Ry(\theta)$ in \cref{eq:distr} gives us an equivalent additive expression
    for the gate $\Ry$: 
    \[\mathbb{1}_{2^{n-k}} \otimes \Ry(\theta) = \bigoplus_{i=1}^{2^{n-k}}\Ry^+(\theta).\]
    Similarly, with $k=1$ and $A_1 = \Rz(\theta)$, we use \cref{eq:phasegate} to express the $\Rz$ gate
    in the $\oplus$-language
    \[
        \Rz(\theta) = e^{-i\sfrac\theta2}\left(\mathbb{1}_1 \oplus \Rz^+(\theta)\right).
    \]
    Finally, for the case \Cx{} we use $k = 2$ and $\Cx{} = \mathbb{1}_2 \oplus \X^+$.
\end{proof}

\subsubsection{$\otimes$, $\oplus$ and the distributivity law}
There is a reason why we insist on calling the usual $\otimes$-based computations
multiplicative and our $\oplus$-based additive.
Note how the simple arithmetic formula
\[
    2 \times \cdots \times 2 \times a = 2^{n-k} \times a = \sum_{i=1}^{2^{n-k}} a
\]
is syntactically identical to \cref{eq:distr};
in fact $\mathbb{C}^n$ with operations $(\otimes, \oplus)$, zero $\varnothing$ and one $\mathbb{1}_1$
is a semiring isomorphic to $(\mathbb{N}, +, 0, \times, 1)$, under the map $\mathbb{C}^n \mapsto n$.
It is thus justified to view $\oplus$ as vector space addition
and $\otimes$ as vector space multiplication.
This has been studied and formalized very elegantly~\cite{Lack2004}.

In practice, this means we can not only consider $\otimes$-based and $\oplus$-based computations
separately, but that they can be combined into a single unified theory with well-defined
formal semantics.
In this paper, we focus on introducing $\oplus$-based computations and leave
the exploration of a joint framework for future work.

\section{Intuitions for Additions}
\subsubsection{Unitary Gaussian Elimination}
The additive model just presented is attractive for its remarkable simplicity.
In fact, the primitives introduced are 
the unitary equivalent of the elementary operations in Gaussian elimination.

Indeed, the first elementary matrix, a row swap, corresponds directly to the $\X^+$ operation.
The further two linear operations of Gaussian elimination,
row multiplication and summation of one row onto another,
are replaced by unitary equivalents:
multiplication is unitary as long as the scalar is a normalized complex number ($\Rz^+$ operation),
while row summation is replaced by a rotation that mixes two rows unitarily ($\Ry^+$ operation):
\[
    \begin{bmatrix}
        x \\ y
    \end{bmatrix} \xmapsto{\Ry}
    \begin{bmatrix}
        tx - \sqrt{1 - t^2}y \\
        ty + \sqrt{1 - t^2}x
    \end{bmatrix}
\]
where we rewrote the $\Ry$ definition using the mixing parameter $t = \cos(\theta)$.

Going one step further, we can give each unitary elementary matrix a clear interpretation
in terms of simple rotations on real two-dimensional subspaces.
This allows us to formulate an interesting intuitive characterisation of unitary transformations.

\subsubsection{Rotations in two-dimensional space}
Additive primitives within a joint system are always the identity on dimensions
outside of the specifically targeted space;
they are local operations.
This is in stark contrast to the usual $\otimes$-primitives, which
from the additive perspective, always act globally on the entire vector space.

This allows us to consider the action of $\Ry^+$, $\Rz^+$ and $\X^+$ locally, independently of system size.
The largest system size that we need to consider is the two-dimensional complex Hilbert space $\mathbb{C}^2$.
The $\Ry^+$ rotation is precisely the rotation of this two-dimensional subspace
\[
\begin{bmatrix}
    a\\ b
\end{bmatrix}
\xmapsto{\Ry^+(\theta)}
\begin{bmatrix}
    \cos(\sfrac\theta2) & -\sin(\sfrac\theta2)\\
    \sin(\sfrac\theta2) & \cos(\sfrac\theta2)\\
\end{bmatrix}
\begin{bmatrix}
    a\\ b
\end{bmatrix}
\]
To make it more real, we can equivalently see it as a simultaneous rotation
in the real and imaginary parts of both vector components.
\[
\Ry^+(\theta)
\begin{bmatrix}
    x + iy\\ z + iw
\end{bmatrix}
= \Ry^+(\theta)
\begin{bmatrix}
    x \\ z
\end{bmatrix}
+ i\Ry^+(\theta)
\begin{bmatrix}
    z \\ w
\end{bmatrix}.
\]
The $\Rz^+$ rotation is its ``orthogonal'' rotation: instead of rotating both real
and imaginary parts of two dimensions,
it rotates the real with the imaginary part within each dimension:
\begin{align*}
    &\ \Rz^+(\theta)(x+iy) = e^{i\theta}(x+iy)\\
    =& \left(\cos(\theta)x - \sin(\theta)y\right) + i\left(\sin(\theta)x + \cos(\theta)y\right).
\end{align*}

\subsubsection{Completing the picture: reversible clasiscal circuits}
The only element missing is the plumbing: the ability to permute dimensions as needed.
This is precisely what is given by the transposition $\X^+$.

Such permutations can always be commuted through to one end of the circuit:
given a permutation $\sigma$ of the wires and its associated permutation matrix $S$:
\begin{itemize}
    \item[--] an operation $\Rz^+_i$ will transform as \[S \circ \Rz^+_i = \Rz^+_{\sigma(i)} \circ S,\]
    \item[--] an operation $\Ry^+_{i,j}$ will transform as \[S \circ \Ry^+_{i,j} = \Ry^+_{\sigma(i), \sigma(j)} \circ S.\]
\end{itemize}

This allows us to entirely ignore the ordering of the dimensions in our representation,
as permutations of the dimensions can be introduced at will.
These clear intuitive semantics 
together with the formal properties of the theory will be the foundation for a succinct
circuit-like diagrammatic representation
that we will describe in the next section.

\section{Additivity as a Circuit}
\label{sec:circuit}
\begin{figure*}
  \includegraphics[width=\textwidth]{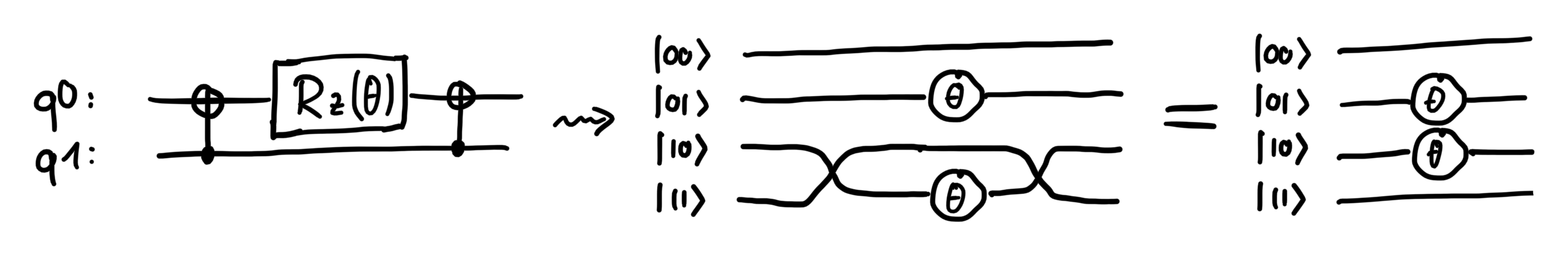}
  \caption{A phase gadget (i.e. a diagonal unitary operation) in the $\otimes$-representation (left) and
  its equivalent as an $\oplus$-circuit, with wire reordering simplification.}
  \label{fig:phasegadget}
\end{figure*}

Syntactically, additive circuits look very similar to the familiar multiplicative kind:
wires run from left to right and boxes can be appended on wires to represent linear maps
and composition.

The crucial difference that is worth highlighting once more is that unlike the usual quantum circuits
where each wire represents a qubit, in our representation each wire is a dimension of the underlying
state space.
A 3-qubit circuit is thus represented by 8 wires in the additive model.

Unlike in $\otimes$-circuits, wires are free to cross each other as they like
to connect to different gates.
For that we introduce the wire swap operation $\tau$
\begin{center}
\includegraphics[width=0.4\columnwidth]{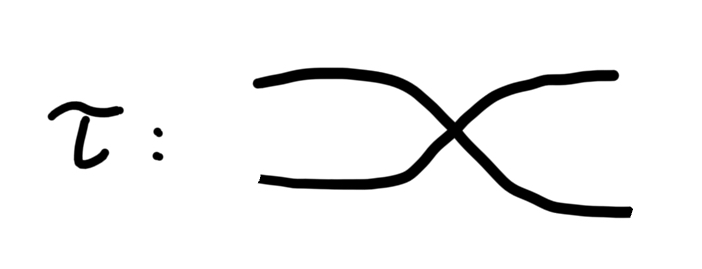}
\end{center}
that corresponds to the $\X^+$ gate -- a transposition that swaps two dimensions.
Applying a swap twice is again the identity:
\begin{center}
\includegraphics[width=0.8\columnwidth]{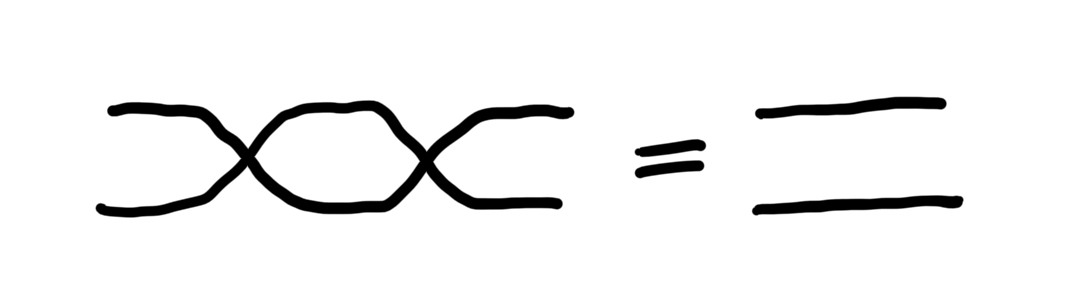}
\end{center}
Composing such swaps on different wires, we can express arbitrary permutations.
The precise swap order is immaterial within a given permutation.
In theory we would be free to write our 3-qubit, 8-dimensional identity circuit
as
\begin{center}
\includegraphics[width=\columnwidth]{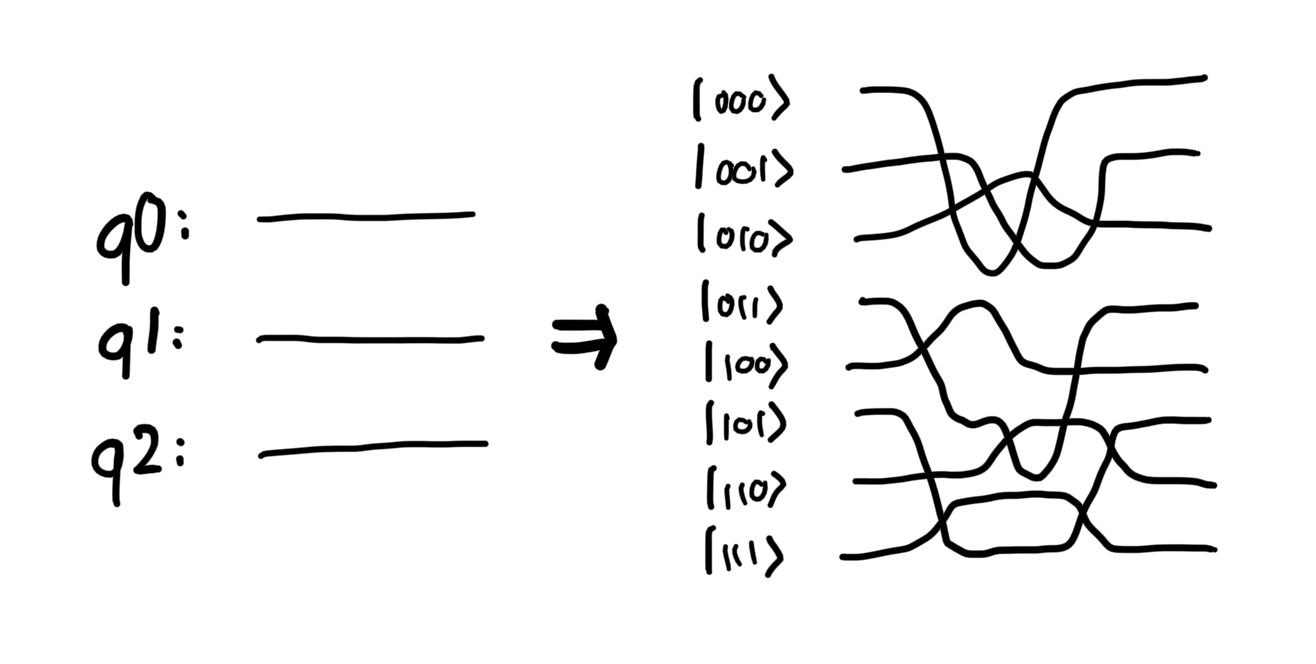}
\end{center}
This is an identity as long as the ordering of the input wires on the left matches the output
wires on the right.
There is no point doing this, however, so we will stay away from such obstruse drawings.

The point made is more general: the exact wire ordering and swaps in additive circuits is immaterial.
It can be left unspecified and for example used as degrees of freedoms by a compiler for smart optimizations.
That a wire ordering is fixed when circuits are drawn is merely an artifact of the two-dimensional representation.
We will see in \cref{sec:synthesis} that we can also represent additive circuits as graphs, in which case
no wire ordering is given.

\subsubsection{Rotations}
We represent the $\Rz^+$ rotation, acting on a single wire,
by a circle parametrized by $\theta$. 
\begin{center}
  \includegraphics[width=0.6\columnwidth]{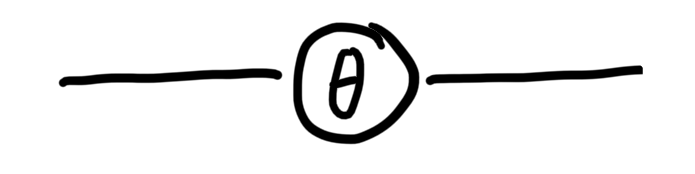}
\end{center}
A single $\Rz^+$ rotation on a system with two wires,
is equivalent to a single qubit \Rz{} rotation.
up to global phase:
\begin{center}
  \includegraphics[width=\columnwidth]{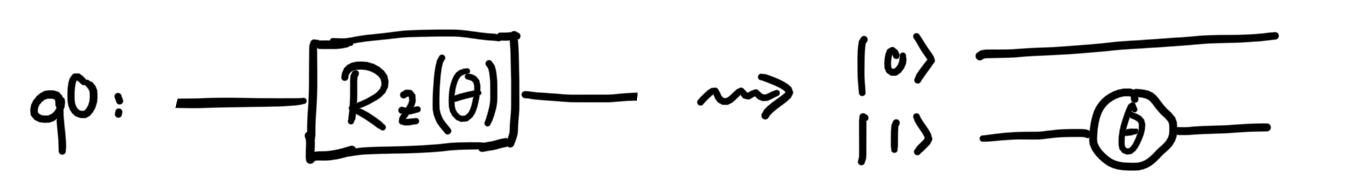}
\end{center}
A two-dimensional $\Ry^+$ rotation on the other hand
is represented by a box over two wires, parametrized
by $\theta$.
\begin{center}
    \includegraphics[width=\columnwidth]{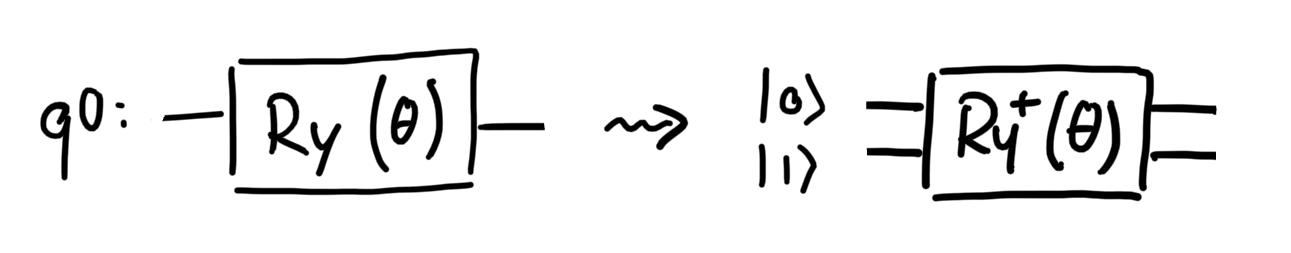}
\end{center}

\begin{figure*}
  \centering
  \includegraphics[width=\textwidth]{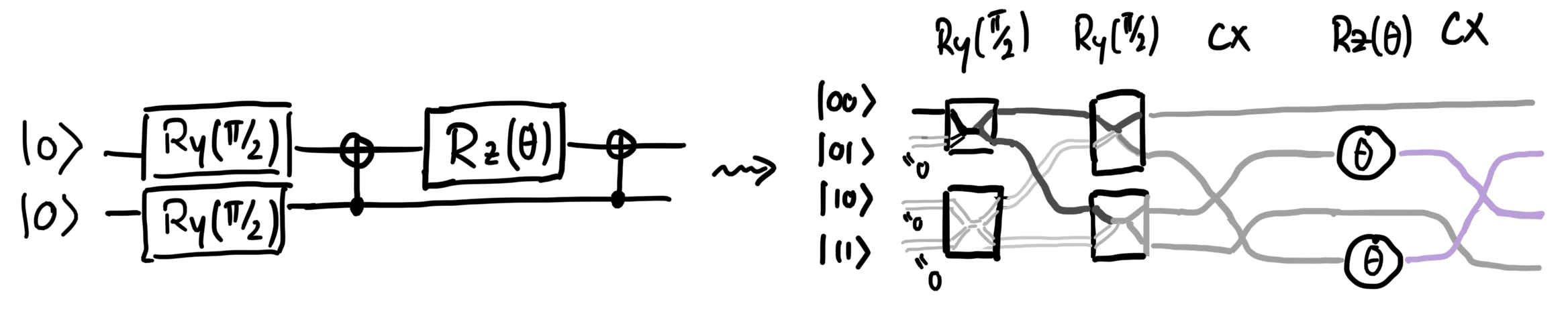}
  \caption{A computation from the $\ket{0}$ state, represented as a $\otimes$-circuit (left)
  and a $\oplus$-circuit (right).
  Wire transparency encodes amplitude -- transparent (amplitude $0$) to opaque (amplitude $1$).
  Color hue indicates phase -- the $\Rz(\theta)$ phase shift changes hue from black to purple.}
  \label{fig:acexample}
\end{figure*}
\subsubsection{Manipulating Additive Circuits}
Using the three primitives introduced ($\Ry^+$, $\Rz^+$ and $\X^+$),
we can represent any quantum computation as an additive circuit.
\Cref{fig:phasegadget} gives a simple example of how a $\otimes$-circuit
is translated into a $\oplus$-circuit.

Note how wire reordering is used to simplify the representation.
Such circuit manipulations are formalized by the following set of identities
\begin{thm}[Additive circuit identities]
    \label{thm:circid}
    Following identities can be used to transform additive circuits:
    \begin{enumerate}
        \item $\Ry^+$, $\Rz^+$ and $\X^+$ can be pushed through wire swaps $\X^+$
        \begin{center}
            \includegraphics[width=0.6\columnwidth]{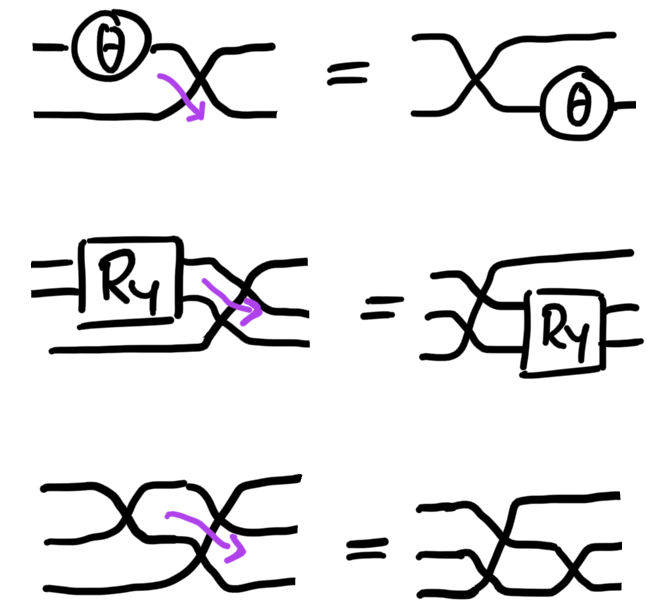}
        \end{center}
        \item Swapping the input wires of $\Ry^+(\theta)$ corresponds to a sign flip of the 
              rotation angle
        \begin{center}
            \includegraphics[width=.8\columnwidth]{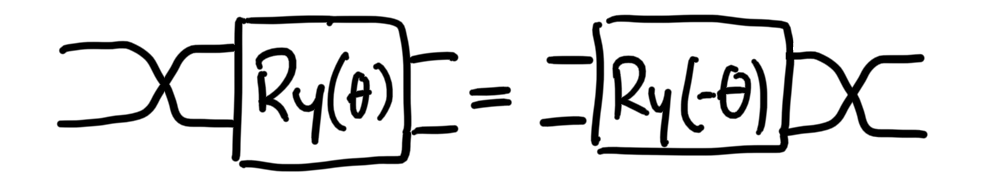}
        \end{center}
        \item Pairs of $\Ry^+$ or $\Rz^+$ gates can be merged when acting on the same wires
        \begin{center}
            \includegraphics[width=.9\columnwidth]{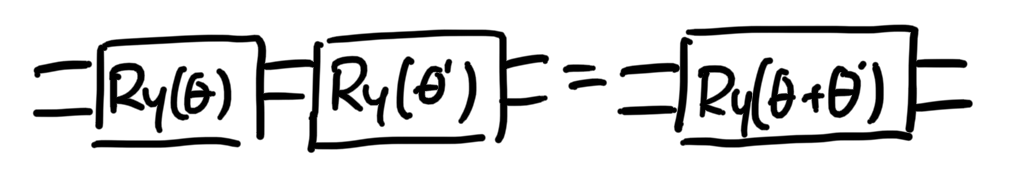}
        \end{center}
        \item Pairs of $\Ry^+$ or $\Rz^+$ gates that act on different wires commute.
        \begin{center}
            \includegraphics[width=.82\columnwidth]{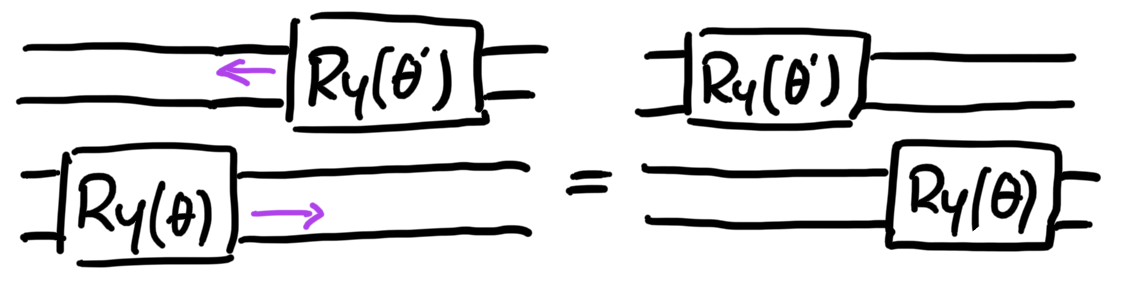}
        \end{center}
      \end{enumerate}
\end{thm}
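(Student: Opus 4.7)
The plan is to prove each of the four identities by reducing to an elementary matrix computation on the one- or two-dimensional subspace where the relevant gates act non-trivially. Because every additive primitive is the identity on all dimensions outside its support, any such gate embeds into an $n$-dimensional system as a direct sum of the form $\mathbb{1}_p \oplus G \oplus \mathbb{1}_q$, and identities involving disjoint wire supports then follow immediately from the block-diagonal structure of $\oplus$.

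For item~1, I would invoke the conjugation-relabeling rule already stated just before the theorem: if $S$ is the permutation matrix of the transposition swapping two wires, then $S \circ \Rz^+_k \circ S = \Rz^+_{\sigma(k)}$, and analogously $\Ry^+_{i,j}$ and $\X^+_{i,j}$ transform by relabeling their indices through $\sigma$. The pictured identity is precisely this conjugation equation, and it is verified by checking the $2\times 2$ or $4\times 4$ block on the relevant support.

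For item~2, I would compute $S \cdot \Ry(\theta) \cdot S$ directly, with $S$ the $2\times 2$ swap matrix. Using the evenness of cosine and oddness of sine, this yields $\Ry(-\theta)$, so $\Ry^+_{j,i}(\theta) = \Ry^+_{i,j}(-\theta)$ as claimed. For item~3, the $\Rz^+$ case is immediate from $e^{i\theta}e^{i\psi} = e^{i(\theta+\psi)}$, while the $\Ry^+$ case is the standard angle-addition identity $\Ry(\theta)\Ry(\psi) = \Ry(\theta+\psi)$ obtained by multiplying two $2\times 2$ rotation matrices. For item~4, two rotations with disjoint (one- or two-dimensional) supports commute because each is the direct sum of an identity block with a single non-trivial block, and the factors of a direct sum can be freely reordered; equivalently, the two operators act on orthogonal subspaces.

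The main obstacle is purely organizational rather than mathematical: every step reduces either to a trigonometric identity or to a block-matrix verification. The only subtle points are in item~1, where one must carefully track how the permuted indices embed into the larger space for two-wire-support gates like $\Ry^+$ and $\X^+$, and in item~4, where ``different wires'' should be interpreted as strictly disjoint supports, since two $\Ry^+$ gates sharing a single wire do not commute in general.
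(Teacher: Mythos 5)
Your proposal follows essentially the same route as the paper's own proof: Identity~1 via conjugation by the permutation matrix (the relation $SA = (SAS^\top)S$ with blocks relabeled by $\sigma$), Identity~2 via conjugation of $\Ry(\theta)$ by the $2\times2$ swap (the paper phrases this as anticommutativity of the Paulis $X$ and $Y$, you as parity of sine and cosine --- the same computation), Identity~3 via angle addition for $\Ry$ and $\Rz$, and Identity~4 via the block-diagonal structure of operators with disjoint supports. There are no gaps; your added caveats (tracking index embeddings under permutation in Identity~1, and reading ``different wires'' as strictly disjoint supports in Identity~4) are correct and, if anything, slightly more careful than the paper's version.
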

The proof of these identities is very straightforward
but is given for completeness in \cref{sec:appendix}.
We will use these identities extensively in the examples \cref{sec:examples}.

\subsubsection{Viewing additive circuits as matrices}
\label{sec:matres}
One of the features of the additive circuit representation is that the evolution of the state vector
through the computation can be easily visualized.
We can achieve this by displaying the phase and amplitude of each dimension at any moment in the computation
on its corresponding wire in the $\oplus$-circuit.

In particular, by activating a single input bitstring -- that is, by setting the input to a computational
basis vector $\ket{b_1\cdots b_n}$, corresponding to one of the input wires --
we can compute its image $U\ket{b_1\cdots b_n}$ which corresponds to one of the column of the unitary $U$.
We can thus read off the matrix components from the additive circuit representation.

We are often particularly interested in the image $U\ket{0\cdots0}$ of the zero state, as this corresponds to the
computation typically executed on hardware.
\Cref{fig:acexample} shows how such a computation might be visualized as an additive circuit.

\section{An Example}
\label{sec:examples}
\begin{figure*}
  \includegraphics[width=\textwidth]{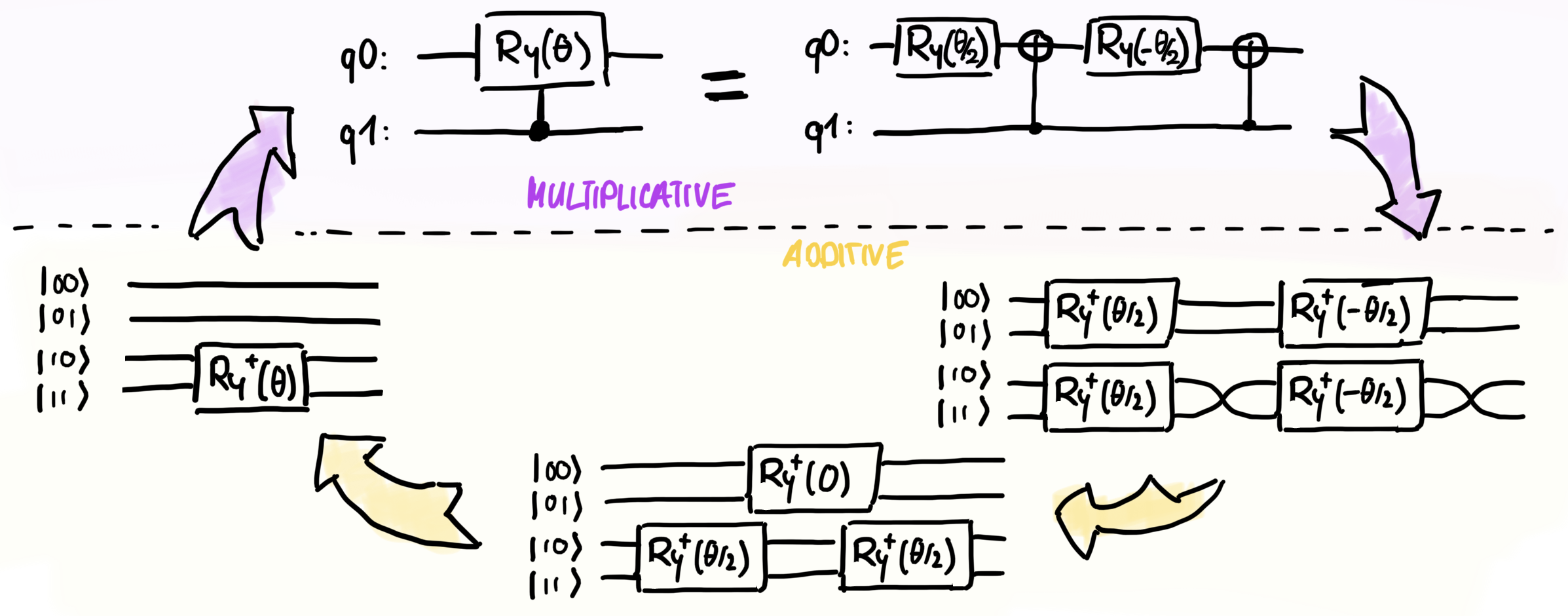}
  \caption{A controlled rotation, decomposed and recomposed using additive.}
  \label{fig:cry_ex}
\end{figure*}

Having simple, canonical representation of certain quantum operations makes it not only
very easy to understand what is happening, but it is also an ideal starting point
for optimization strategies.

There are a few types of circuits that the additive model can describe particularly well.
We already saw examples for two very common circuit structures, the reversible classical 
circuits that disappear into implicit wire reordering operations
and diagonal operations which appear as identities
with phase labels on some of the edges (\cref{fig:phasegadget}).

Another category of circuits that are suited to additive circuits are multi-controlled and multiplexed
operations~\cite{multiplexor}.
These are operations that consider a set of control qubits, on which they act as the identity,
and a set of target qubits that are transformed according to a unitary specified by the bitstring
of the control qubits.

To keep our exposition simple,
we will consider the example of a $\textsc{CRy}(\theta)$ gate acting on two qubits.
Gates with higher number of qubits and more complex controlled patterns can be processed and 
represented just as elegantly in much the same way.

Consider \cref{fig:cry_ex}.
The \textsc{CRy} gate and its decomposition into the universal $\{\Ry, \Rz, \Cx\}$ gate set is
shown at the top of the figure.
Following the arrows clockwise, we first see how the $\otimes$-circuit in the universal gate set (top right)
is translated into a $\oplus$-circuit (right-most circuit).

Each original $\Ry$ rotation is split into two $\Ry^+$ rotations.
This allows to merge rotations
that could not be combined in the multiplicative representation.
We also untangle the wires by reordering the inputs of one of the $\Ry^+$ rotations
using \cref{thm:circid}.
This yields the circuit at the bottom of the figure.

Removing the zero-angle rotation that is an identity, we find the canonical additive circuit
we would expect for the \textsc{CRy} gate: it applies a $\Ry$ rotation if the first qubit is set
and is the identity otherwise.
From this, the synthesis algorithm presented in \cref{sec:synthesis}
is able to synthesize the original \textsc{CRy} gate.

We could also provide an alternative decomposition of the \textsc{CRy} gate, in which the
$\Ry(\sfrac\theta2)$ rotation on the first qubit happens after the second \Cx.
Such degrees of freedom are hard to visualize in the $\otimes$-circuit.

The additive representation of the \textsc{CRy} gate on the other hand makes this commutativity evident.
Without considering the simplifications and the synthesis of the $\textsc{CRy}$ gate,
the $\oplus$-circuit of the original decomposition on the right-hand side shows clearly that
the first layer of $\Ry^+$ rotations in the $\oplus$-circuit can be moved past the second layer.
Expressed in the multiplicative circuit:
\begin{center}
    \includegraphics[width=\columnwidth]{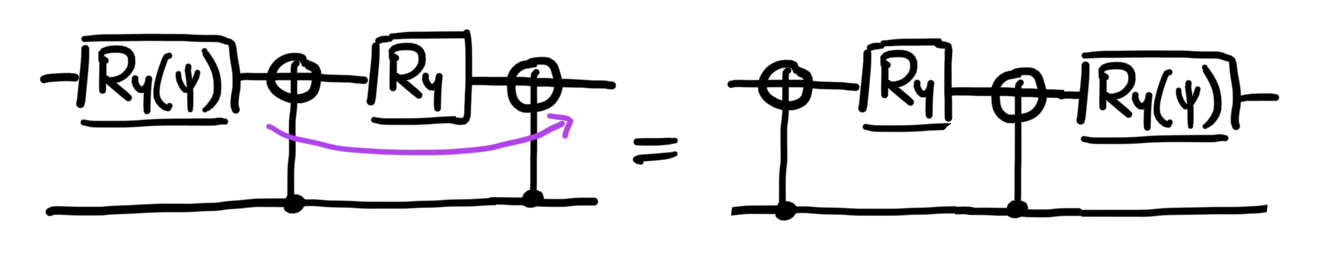}
\end{center}
Using alternative decompositions of the \textsc{CRy} gate yields the exact
same additive circuit.

\bingbat

That we are able to synthesize higher-level primitives such as the \textsc{CRy} gate -- the same techniques
would also succeed with multi-controlled gates, toffolis and more --
from its decomposed circuit is particularly powerful for optimizing compilers,
as we can hope to leverage the more abstract primitives for advanced optimizations.

This is also invaluable when compiling to architectures that support primitives that were not
in the original gate set.
Typical examples are multi-qubit gates such as Mølmer–Sørensen gates available
in ion traps~\cite{Grzesiak2020,Figgatt2019,Martinez2016} --
in the absence of such gates in the input circuit, current compilers are unable to leverage
such primitives.

To our knowledge, none of the current industrial quantum compilers~\cite{tket,cirq,Qiskit}
are able to synthesize such primitives from decomposed circuits, nor are we aware of any other
similar effort in the literature.

\section{Circuit Synthesis}
\label{sec:synthesis}
\begin{figure*}
  \includegraphics[width=\textwidth]{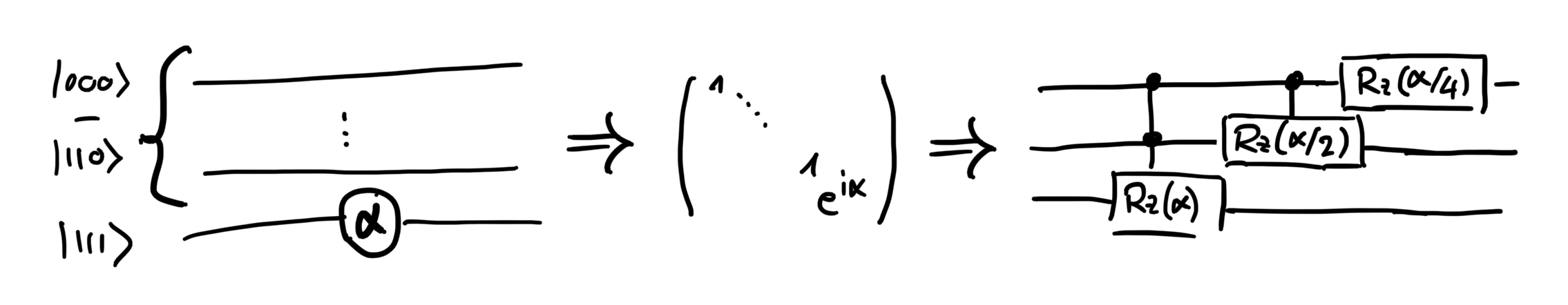}
  \caption{A single phase shift (left) translated as a multiplicative circuit (right).}
  \label{fig:phaseshift}
\end{figure*}

Previous sections have introduced the additive model of quantum computation
and its diagrammatic representation.
We have also shown in \cref{thm:universality} how arbitrary multiplicative circuits
can be translated into the additive model, and in \cref{sec:matres}
how the matrix representation can be read off from additive diagrams.

Before concluding, we would like to develop one more essential aspect of the theory,
namely the synthesis of a multiplicative circuit from the additive model.
We call this problem $\otimes$-circuit synthesis.

An efficient transformation between the two will be central to making
the additive formalism viable both for user-facing circuit design as well as circuit optimization,
so that computations formulated in the additive framework can be efficiently run
on physical hardware.

\subsubsection{Naive synthesis}
Each additive primitive can be expressed as a multiplicative circuit.
The simplest approach to synthesize a multiplicative circuit would
consider each additive primitive separately and compose their multiplicative
circuit representations one by one.

A general strategy can be devised. Consider a one or two-dimensional operation $A$
and suppose without loss of generality that it acts on the first one or two
dimensions of a larger system $A \oplus \mathbb{1}$\footnote{%
In practice implementing this permutation might incur a significant quantum cost.}.
In matrix representation, this corresponds to the matrix
of $A$ enlarged with additional rows and columns with ones on the diagonal and zeros everywhere
else.

From a multiplicative qubits perspective, this corresponds to applying operation $A$ on the first qubit
if and only if all other qubits are set: it is a controlled gate, controlled on all other qubits.
We immediately get that each $\Ry^+$ gate becomes a multi-controlled \Ctrln{\Ry^+} operation,
and a swap $\X^+$ gate becomes a multi-controlled \Ctrln{\X^+}, which in the special cases of 2 and 3 qubits
is a \Cx{}, respectively a Toffoli gate.

The $\Rz^+$ rotation can be treated in the same way, with one caveat.
We have defined $\Rz^+$ to correspond to the multiplicative $\Rz$ gate only up to global phase.
When controlling it, we obtain additional controlled phases.

\Cref{fig:phaseshift} shows the special case of the decomposition of the $\Rz^+$ gate
as a cascade of multi-controlled \Ctrln{\Rz^+} rotations. 
We could equivalently choose to represent it as a controlled phase gate and leave it up
to optimizing compilers to find the best decomposition.

The naive strategy however will perform poorly in most cases.
In the case of converting a multiplicative circuit to an additive one and back,
this strategy gives us an exponential blow-up in the number of gates.
Indeed, a $n$-qubit circuit with a single \Ry{} rotation gives a $2^n$-dimensional
additive circuit with $2^{n-1}$ $\Ry^+$ gates with the construction of \cref{thm:universality}.

If we decompose each of these $\Ry^+$ gates individually, we will obtain an exponential
number of $n-1$-controlled operations that will each require at least $\mathcal{O}(n)$ many
primitive operations.
We can drastically improve on this.

\begin{figure*}
    \centering
    \includegraphics[width=\textwidth]{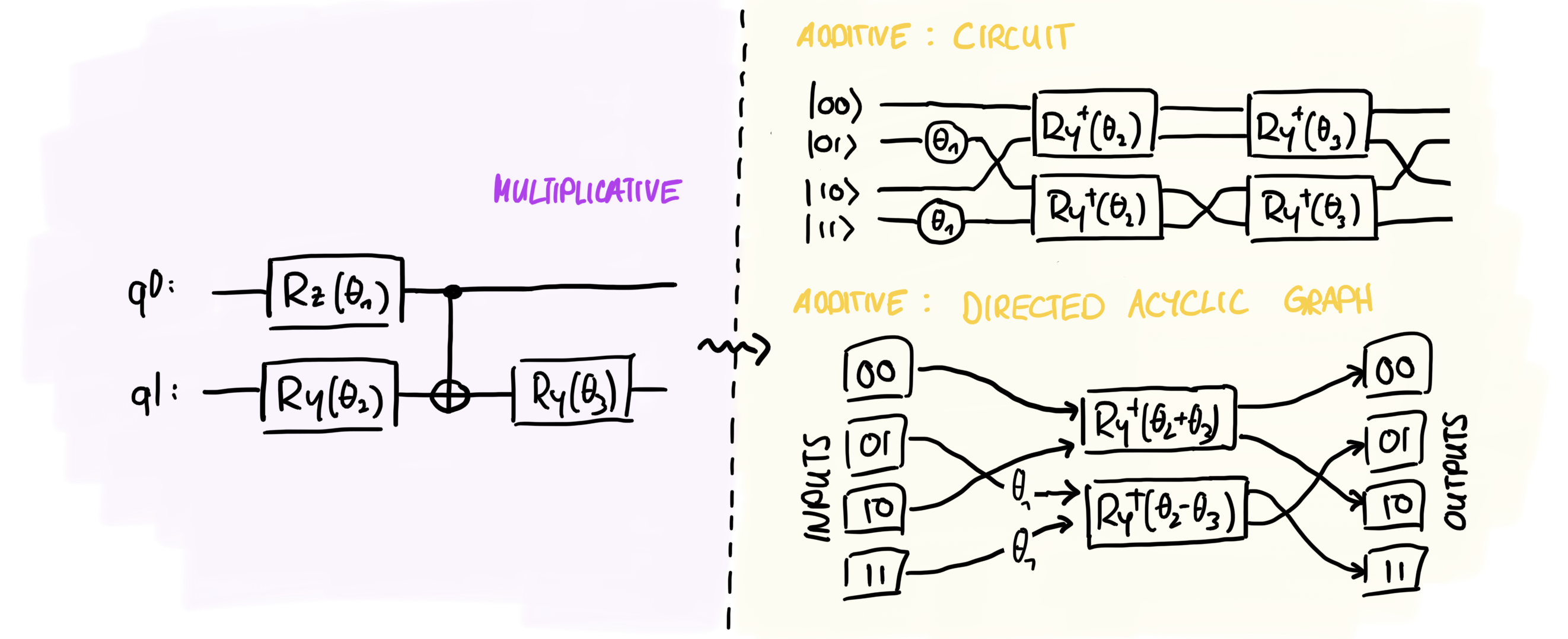}
    \caption{A quantum computation, expressed as a $\otimes$-circuit (left), a $\oplus$-circuit (top right) and an additive DAG (bottom right).
    Note that the additional wire swaps introduced in the $\oplus$-circuit for convenience of presentation are implicit in the DAG.}
    \label{fig:dag}
\end{figure*}
\subsubsection{Internal Representation}
We implement additive circuits 
as directed acyclic graphs (DAG) similar to how current quantum compilers represent circuits~\cite{tket,Qiskit}.
Additive circuits are particularly simple to store in DAGs and much of the redundancy in the visual
representation -- wire ordering and order of commuting operations -- is removed.
We will see that we can further abstract away $\Rz^+$ rotations, leaving only the connectivity
of $\Ry^+$ operations to be represented.

We introduce input and output vertices for each dimension of the vector space.
Each two-dimensional $\Ry^+$ operation becomes an internal vertex,
with two incoming and two outgoing edges, one for each dimension.

The incoming (outgoing) neighbors are the preceding (following) $\Ry^+$ gates that act on
the same dimension.
In the case of first and last operation on each dimension, the respective incoming and
outgoing edge connects to an input or output vertex.

We perform some additional transformations to store the DAG in a canonical representation.
We merge all consecutive $\Rz^+$ operations between two $\Ry^+$ gates into a single $\Rz^+(\theta)$ rotation.
The parameter $\theta$ of the rotation can then be added as a label to its corresponding edge.
Similarly, we merge any two neighboring $\Ry^+$ vertices if they act on the same pair
of dimensions.

To account for dimension permutation of $\Ry^+$ inputs, 
we use Identity 2 from \cref{thm:circid}
\begin{equation}
\begin{gathered}
\includegraphics[width=.9\columnwidth]{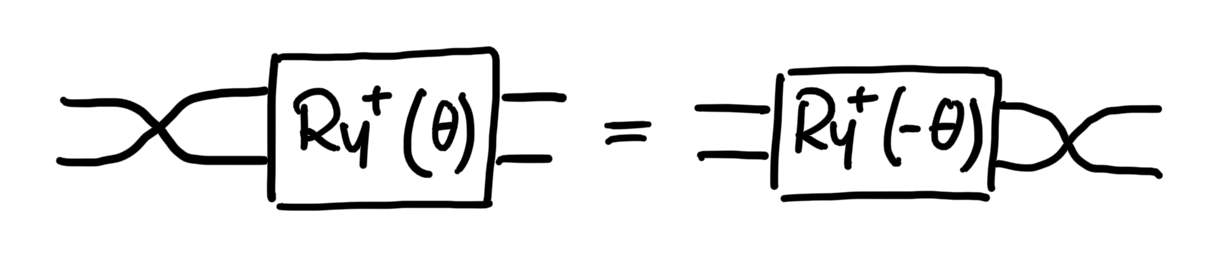}
\end{gathered}
\label{eq:cry_swap}
\end{equation}
so that the ordering of the edges at each vertex can always be set to a chosen fixed ordering.
An example of a circuit and its DAG representation is given in figure \cref{fig:dag}.

\subsubsection{Vertex fusion and stacked vertices}
A key step in the algorithm is identifying sets of $\Ry^+$ rotations that can be synthesized jointly:
While an isolated $\Ry^+$ rotation in a $2^n$-dimensional space will result in a $n-1$-controlled
rotation in multiplicative space, a joint set of $2^{n-1}$ $\Ry^+$ rotations can be synthesized
into a single-qubit \Ry{} rotation, using the construction of \cref{thm:universality},
by reordering the wires such that it can be written as
\[
\bigoplus_{i=1}^{2^{n-1}} \Ry^+(\theta) = \mathbb{1}_{2^{n-1}} \otimes \Ry(\theta).
\]

This can be generalized: a joint set of $2^{k-1}$ $\Ry^+$ rotations will correspond
to a $n-k$-controlled $\Ry$ rotation.

To support the grouping of vertices, we introduce a fusion operation between two vertices.
Two vertices with the same angle parameter can be fused if they are not causally related with each other --
that is, there is no directed path in the DAG from one vertex to the other --
and if their angles match.

\begin{figure*}
  \centering
  \includegraphics[width=\textwidth]{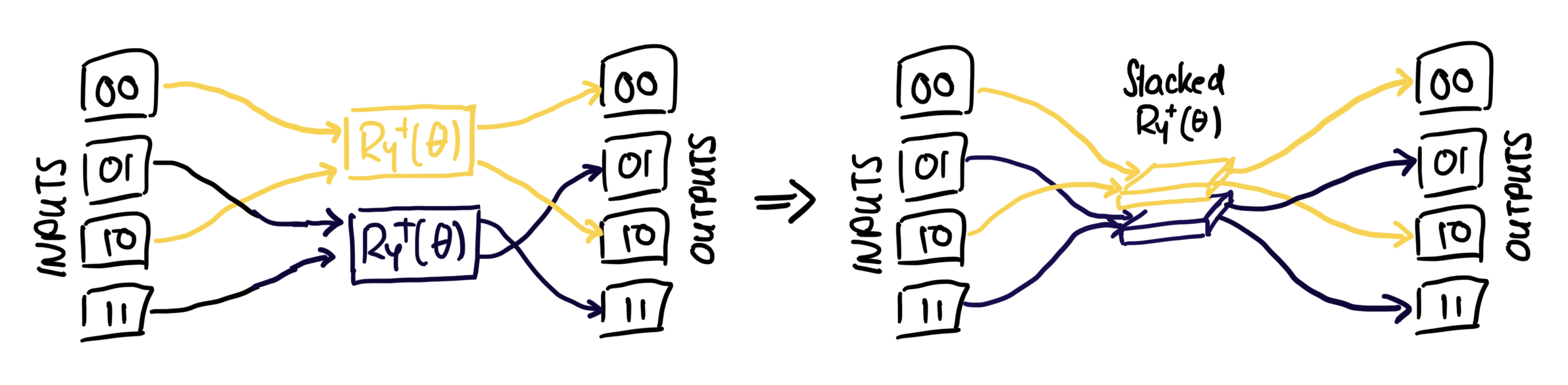}
  \caption{A simple additive circuit before (left) and after stacking (right).
  The resulting stacked vertex is of length 2.}
  \label{fig:stacked}
\end{figure*}

The fusion operation merges the incoming and outgoing edges of the two vertices into one
new vertex. This vertex must keep track of the incoming and outgoing edge pairs
that correspond to the pairs of edges
of the original unfused vertices.
\Cref{fig:stacked} illustrates vertex fusion.

With the picture in our minds of such a fused vertex, made up of a collection of unfused vertices,
we refer to them as stacked vertices.
The length of such a vertex is given by the number of unfused vertices that compose it.

The more rotations can be combined and synthesized together, the lower the
gate count of the resulting synthesized circuit.
However, synthesis is only possible for stacked vertices of length $2^k$, for some non-negative integer $k$.

We can achieve such power-of-two stacked vertices by decomposing each stack into an array of power-of-two stacks.
Given a stack of length $n$, this can be achieved in two ways.
The stack can either be split up into smaller power-of-two stacks using the binary decomposition of $n = 2^{k_1} + 2^{k_2} + \cdots$
or by enlarging the stack to the next power of two and introducing another smaller stack with inverse
operations on the additional dimensions, which can be decomposed recursively.

\subsubsection{Edge placement and routing}
Each dimension in the additive space must correspond to a bitstring in the multiplicative model.
This is subject to connectivity constraints, as controlled rotations can only be synthesized from power-of-two
stacked vertices when the rotation acts locally on one qubit.
A $\Ry^+$ operation acting on bitstrings \texttt{00} and \texttt{11}, for instance,
cannot be expressed as a local single-qubit rotation.

In analogy to the qubit routing problem~\cite{routingtket}, we refer to the map from dimensions to bitstrings
as the placement map.
The additive routing problem then consists in satisfying the connectivity constraints 
by modifying the placement map as we traverse the circuit, 
while minimizing the cost of the permutations introduced.

The connectivity constraints can best be understood on the hypercube graph, in which vertices are bitstrings and edges
exist between any two vertices that have Hamming distance one.
For three qubits, we get the three-dimensional cube
\begin{center}
  \includegraphics[width=0.5\columnwidth]{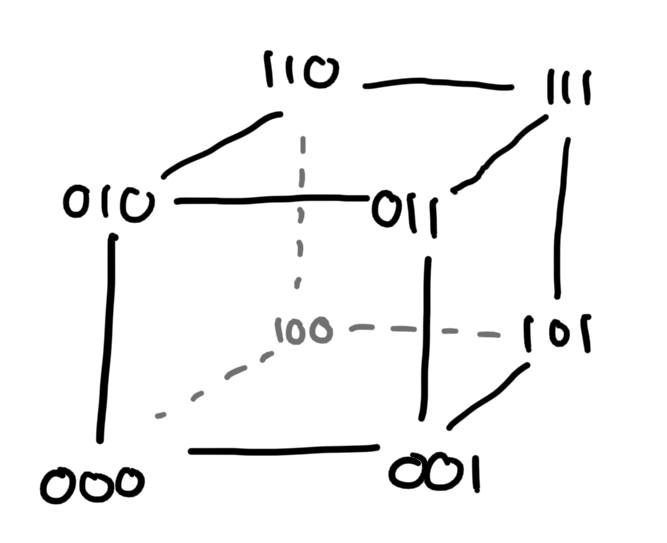}
\end{center}
The vertices and edges of the hypercube can be arranged in a grid in $n$-dimensional space
so that all parallel edges correspond to bit changes in the same bit.

\begin{figure*}
  \centering
  \includegraphics[width=.9\textwidth]{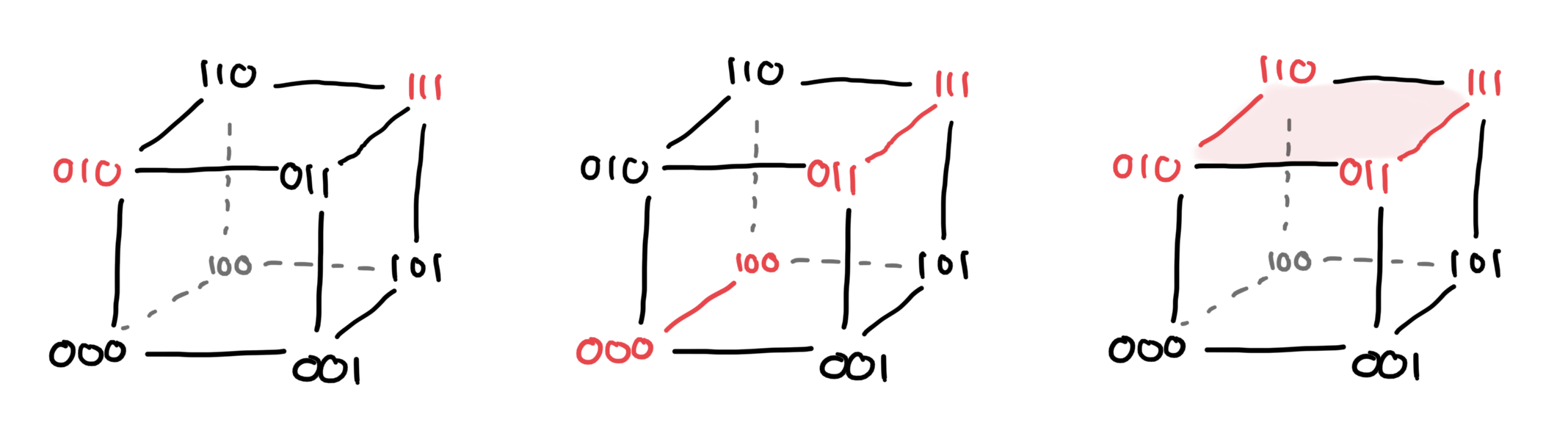}
  \caption{On the left, two bitstrings not fulfilling any constraint. In the middle, two pairs of bitstrings 
  satisfying the edge and parallel constraints, but not satisfying the dense constraint. On the right,
  two pairs of bitstrings satisfying all three constraints. This would be a valid placement for a stacked vertex.}
  \label{fig:constraints}
\end{figure*}

We can define vertex connectivity constraints as follows
\begin{definition}[Connectivity constraints]
    \begin{enumerate}\item[]
        \item 
  For any unstacked vertex, the two bitstrings on which it is placed must
  share an edge on the hypercube (\ul{edge constraint}).
        \item
  For any stacked vertex, each unstacked vertex of which it is composed must
  satisfy the edge constraint, and all resulting edges on the hypercube must
  be parallel (\ul{parallel constraint}).
        \item
  For any power-of-two stacked vertex of length $2^k$, all bitstrings on which it is placed must 
  live within a $k$-dimensional subspace of the $n$-hypercube (\ul{dense constraint}).
    \end{enumerate}

\end{definition}
In other words:
the edge constraints states that two bitstrings acted upon by an unstacked vertex
can only differ in one bit;
the parallel constraint requires that all pairs
of bitstrings from unstacked vertices within a stacked vertex
may only differ in the same bit;
and the dense constraint means any two bitstrings of a stacked vertex of length $2^k$
can only differ in at most $k$ bits.
\Cref{fig:constraints} illustrates these three constraints.

\begin{thm}
  Consider a power-of-two stacked vertex $v$ of length $2^k$ in an additive DAG and
  let $n > k$ be an integer number.
  If a placement map satisfies the edge constraint, the parallel constraint and the dense constraint with respect to $v$,
  then it can be factorized as a controlled $\textsc{C}^{n-k-1}\Ry$ rotation in a $n$-qubit circuit,
  up to conjugations with single-qubit \X{} gates on a subset of qubits.
  In the special case where $v$ has maximal length $k + 1 = n$, the controlled rotation is
  simplified to an uncontrolled $\Ry$ gate.
\end{thm}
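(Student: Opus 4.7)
The plan is to reduce the statement to the universality construction of \cref{thm:universality} by exhibiting an explicit qubit labelling of the $n$-dimensional hypercube that aligns the stacked vertex's support with a single-qubit rotation on a subcube. The three connectivity constraints each contribute a structural ingredient: the parallel constraint identifies a single target bit, the dense constraint localises the action to a low-dimensional subcube, and the edge constraint ensures that every unstacked component is a valid nearest-neighbour $\Ry^+$.

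First I would extract the target qubit and the control qubits. By the parallel constraint, each of the $2^k$ unstacked vertices composing $v$ acts on a pair of bitstrings differing in the same coordinate $t\in\{1,\dots,n\}$; this $t$ is declared the target of the output $\Ry$ rotation. Next, by the dense constraint, the $2^{k+1}$ bitstrings touched by $v$ lie in a $(k+1)$-dimensional subcube of the $n$-hypercube, so there is a set $S\subseteq\{1,\dots,n\}$ of size $k+1$ containing $t$ such that the support varies exactly on coordinates $S$ and is fixed on $C=\{1,\dots,n\}\setminus S$. The fixed values on the $n-k-1$ coordinates in $C$ form the control pattern.

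Then I would normalise the controls. Conjugating by an $\X$ gate on each qubit in $C$ whose fixed value is $0$ shifts the support to the subspace where every control qubit is in state $\ket{1}$; this is precisely the conjugation slack allowed by the statement. On this normalised subspace the stacked vertex acts on exactly the $2^{k+1}$ bitstrings whose $C$-coordinates equal $1$, and each unstacked $\Ry^+(\theta)$ acts on a pair differing only in coordinate $t$. Applying the direct-sum identity $\bigoplus_{i=1}^{2^{k}}\Ry^+(\theta) = \mathbb{1}_{2^{k}}\otimes\Ry(\theta)$ established in the proof of \cref{thm:universality}, with the $k$ non-target coordinates of $S$ playing the role of the identity-direction of the Kronecker product, I obtain an operation equal to $\Ry(\theta)$ on qubit $t$ and the identity on the remaining qubits of $S$. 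Packaging this together with the control constraint on $C$ yields a $\textsc{C}^{n-k-1}\Ry$ rotation with target $t$ and controls $C$. When $k+1=n$ we have $C=\varnothing$, so no conjugations are needed and the controlled rotation collapses to a plain $\Ry(\theta)$ on qubit $t$.

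The main obstacle I foresee is the bookkeeping around edge orderings at each unstacked vertex. Identity 2 of \cref{thm:circid} tells us that swapping the two inputs of an $\Ry^+$ gate flips the sign of $\theta$, so for the $2^{k}$ unstacked vertices to assemble into a single coherent $\mathbb{1}\otimes\Ry(\theta)$ their oriented edges along coordinate $t$ must point consistently, say always from the $\ket{0}$-value to the $\ket{1}$-value of bit $t$. This is not automatic from the DAG alone; however, the canonical edge ordering introduced earlier for stored DAGs, together with a globally fixed orientation convention on coordinate $t$, can enforce it, and any residual discrepancy can be absorbed into a single sign flip of $\theta$ via the same Identity 2.
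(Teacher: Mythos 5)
Your proposal is correct and follows essentially the same route as the paper's proof: use the parallel constraint to single out the target qubit, the dense constraint to fix the $n-k-1$ control bits, conjugate with \X{} gates to normalise the control pattern to all-ones, and read off a $\textsc{C}^{n-k-1}\Ry(\theta)$ via the direct-sum identity of \cref{thm:universality}. Your closing remark on orienting the edges consistently along the target coordinate (absorbing any mismatch into a sign flip of $\theta$ via Identity 2 of \cref{thm:circid}) is a detail the paper's proof leaves implicit, but it does not change the argument.
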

\begin{proof}
  Using the edge and parallel constraints, there is a single qubit that is not left
  unchanged by the transformation. Assume it is the $n$-th qubit.
  Using the dense constraint, assume further that the first $n-k-1$ bits are identical across all bitstrings.
  We conjugate qubits with $\X$ gates where necessary, so that qubits $1$ to $n-k-1$ are set to 1 for all placed bitstrings.
  We thus get that the transformation associated with stacked vertex $v$ is given by
  \[
      \ket{b_1\cdots b_n} \xmapsto{v} \begin{cases}
          \ket{b_1\cdots b_{n-1}} \Ry(\theta)\ket{b_n}&\ \textrm{if }c = 1\\
          \ket{b_1\cdots b_n} &\ \textrm{otherwise.}
      \end{cases}
  \]
  where $c = \prod_{i=1}^{n-k-1} b_i$ and $\theta$ is the angle of the rotations in $v$.
  This is precisely a $\textsc{C}^{n-k-1}\Ry(\theta)$ gate controlled on qubits $1$ to $n -k -1$ with target qubit $n$.
\end{proof}

We can thus proceed with the initial placement and routing by traversing the vertices of the DAG
in topological order, permutating bitstrings whenever a vertex' connectivity constraints are not satisfied.
The overall cost of routing will greatly depend on choosing permutations of bitstrings -- that is, reversible classical circuits --
that can be implemented efficiently in a $\otimes$-circuit.

The established literature on reversible classical circuits provides a wide range of heuristics for optimized
circuit synthesis that could be leveraged for this routing problem~%
\cite{Shende2002,Susam20171,Li2014,Stojkovic2020,Scott2008231,Arabzadeh2010849,Datta2013209,Datta2013316,Gado2021}
A transposition-based approach as suggested in~\cite{Stojkovic2020} could be a fast first heuristic, as it is simple
to establish which bitstring swaps would solve the given connectivity constraints
from the hypercube model.

\subsubsection{Relative phase corrections}
We have so far focused on fusing, placing and synthesizing vertices in our DAG representation.
The last missing link is the implementation of the graph edges.
Phase information stored on the edges must be synthesized before the vertices can themselves be synthesized.

This can be readily achieved using well-studied phase polynomial synthesis algorithms
that show excellent performance in practice~\cite{Amy2019,Degriend2020,Vandaele2021}.

Note also that only relative phase differences between dimensions on a given vertex must be implemented;
this means that phase differences with other independent dimensions can be treated lazily,
implementing them only when they are required.
Combined with techniques to relax the reversible circuit synthesis problem to solutions up to relative
phase, this could prove a powerful optimization technique in the future~\cite{Maslov2016}.

\subsubsection{Synthesis Strategy in Summary}
We proposed a circuit synthesis algorithm that can convert a computation from its additive representation
into an executable multiplicative circuit.
The output quantum circuit can then be further optimized on existing quantum compilers~\cite{tket,Qiskit,cirq}
before execution on real hardware.
An implementation in Julia using the Yao framework~\cite{julia,YaoFramework} is in development
and will be made available soon.

In summary:
\begin{enumerate}
  \item Convert an additive circuit to its additive DAG representation.
In the canonical form, any consecutive $\Ry^+$ or $\Rz^+$ rotation will be merged.
  \item Stack vertices. Find vertices that can be factorized and synthesized jointly and fuse them together.
  \item Organize stacked vertices in stacks of size $2^k$.
  \item Satisfy connectivity constraints by placing and routing DAG edges on the hypercube.
  \item Convert to multiplicative circuit by in turn synthesizing controlled rotations from vertices,
  phase polynomials from edge labels and reversible classical circuits from placement map.
\end{enumerate}

\section{Conclusion}
\label{sec:discussion}
In this paper, we have introduced an additive model for quantum computation that
is a departure from all usual $\otimes$-based formalisms
such as the quantum circuit language and its derivatives.
In the additive formalism, widely used constructions such as controlled
unitaries, diagonal operations and reversible classical circuits
have a very intuitive presentation.

It is in its structure much closer to the matrix representation of unitary computations
but offers a graphical circuit-like representation that makes it
convenient to use for circuit design.
Unlike the matrix representation, additive circuits further maintain some circuit information,
allowing for circuit synthesis.
This removes the need for matrix decomposition techniques that are extremely
computationally expensive and scale poorly~\cite{Iten2019,Davis2020,Smith2021,Gheorghiu2021,Wu2020,Krol2021,Loke2014}.

Expressing quantum computations with $\oplus$-based primitives is also interesting for
circuit optimization purposes.
As an example, we saw that additive circuits can recover higher-level controlled operations
from its decomposition.
The example provided in \cref{sec:examples} was simple, but the same strategy easily
scales to larger primitives.

Decompositions of such multi-controlled operations quickly become complex,
making it a challenge for compiler internal representations to identify and leverage
such higher-level structure.
We are not aware of any tool able to perform this.
Additive circuits also make commutation relations of such complex structures apparent,
which cannot be inferred from local considerations in $\otimes$-based circuits.

One of the main challenges of the additive representation
is the $\otimes$-circuit synthesis probelem.
An efficient conversion from $\oplus$ to $\otimes$-circuits is essential
for widespread use of the technology.

Our proposal in \cref{sec:synthesis} makes first step towards a resolution.
We show that circuit synthesis strategies can leverage optimization strategies
from the widely successful fields of reversible classical circuits synthesis
and phase polynomial synthesis.
This gives us an important head start towards competitive performance.

The scaling of the additive representation in general will be exponential, so that it will
never be usable on large scale circuits.
It might still be competitive however with matrix synthesis, as finding efficient implementations
of unitaries is a particularly hard problem.

Promisingly, the distributivity rules of ($\otimes$, $\oplus$) mean that the two
circuit models could easily be combined and considered together.
Circuits could then be additive in parts where it provides an advantage, whilst
keeping the scaling of the $\otimes$-based model overall.
As an optimization technique, this would be closely related to block-wise matrix
synthesis, where a circuit is decomposed into smaller subcircuits that are optimized
individually using a matrix synthesis algorithm.
A circuit-based ($\otimes$, $\oplus$)-approach would be much more modular and composable, however.

Finally, the additive model provides an excellent opportunity to develop new intuitions on quantum computing
for educational and research purposes.
As a first step, we are developing software that will allow
users to design 
quantum circuits in the additive language
and to perform $\otimes$-circuit synthesis, so that arbitrary $\oplus$-circuits
could be compiled and optimized for execution on today's quantum hardware.

\section*{Acknowledgements}
My gratitude goes to Stefano, who helped me formulate many of the ideas I presented here.
I would also like to thank my excellent team and colleagues at Cambridge Quantum, and Silas
in particular for listening to my lengthy expositions.


{\small\printbibliography}

\appendix
\section{Proof of \Cref{thm:circid}}
\label{sec:appendix}
\begin{proof}
    Identity 1 is a special case of the simple equation
    \[SA = (SAS^\top)S\]
    where $S$ is some permutation matrix and $A$ a linear operation.
    In the case where we can write $A$ as $A = A_1 \oplus \cdots \oplus A_k$ and $S$
    leaves each $A_i$ subspace unchanged, then there is a permutation $\sigma$ of $1, \dots, k$
    such that
    \[
        S(A_1 \oplus \cdots \oplus A_k)S^\top = A_{\sigma(1)} \oplus \cdots \oplus A_{\sigma(k)}
    \]
    and thus
    \[
        S(A_1 \oplus \cdots \oplus A_k) = (A_{\sigma(1)} \oplus \cdots \oplus A_{\sigma(k)})S.
    \]
    Writing this graphically yields the statement.

    Identity 2 follows from the anticommutativity of the Pauli matrices $X$ and $Y$
    \begin{align*}
        X\Ry(\theta) &= X(\mathbb{1}_2 \cos(\theta) + iY\sin(\theta))\\
        &= (\mathbb{1}_2 \cos(\theta) - iY\sin(\theta))X = \Ry(-\theta)X.
    \end{align*}
    Identity 3 is a well-known property of rotations:
    $\Ry(\theta) + \Ry(\theta') = \Ry(\theta + \theta')$; the same holds for $\Rz$.

    Identity 4 can be written as
    \begin{align*}
        &\left(\mathbb{1}_2 \oplus \Ry^+(\theta) \right)\left(\Ry^+(\theta') \oplus \mathbb{1}_2\right)
        = \Ry^+(\theta') \oplus \Ry^+(\theta)\\
        = &\left(\Ry^+(\theta') \oplus \mathbb{1}_2 \right)\left(\mathbb{1}_2 \oplus \Ry^+(\theta)\right),
    \end{align*}
    and similarly for $\Ry^+$.
\end{proof}

\end{document}